\theoremstyle{thmstyleone}%
\newtheorem{theorem}{Theorem}
\newtheorem{lemma}[theorem]{Lemma}%
\theoremstyle{thmstyletwo}%
\newtheorem{example}{Example}%
\theoremstyle{thmstylethree}%
\newtheorem{definition}{Definition}%
\DeclarePairedDelimiter\abs{\lvert}{\rvert}
\DeclarePairedDelimiter\ceil{\lceil}{\rceil}
\DeclarePairedDelimiter\parenv{\lparen}{\rparen}
\DeclarePairedDelimiter\set{\{}{\}}
\renewcommand{\leq}{\leqslant}
\renewcommand{\geq}{\geqslant}
\newcommand{\cT}{\mathcal{T}}
\newcommand{\N}{\mathbb{N}}
\newcommand{\Z}{\mathbb{Z}}
\newcommand{\eqdef}{\triangleq}
\newcommand{\der}{\Longrightarrow}
\newcommand{\vv}{\phantom{|}}
\newcommand{\dis}{\mathrm{dis}}
\DeclareMathOperator{\pos}{Pos}
\DeclareMathOperator{\midp}{Mid}
\DeclareMathOperator{\pref}{Pref}
\begin{document}

\title[On Duplication-Free Codes for Disjoint or Equal-Length Errors]{On Duplication-Free Codes for Disjoint or Equal-Length Errors}


\author*[1]{\fnm{Wenjun} \sur{Yu}}\email{wenjun@post.bgu.ac.il}

\author[1,2]{\fnm{Moshe} \sur{Schwartz}}\email{schwartz.moshe@mcmaster.ca}

\affil[1]{\orgdiv{School of Electrical and Computer Engineering}, \orgname{Ben-Gurion University of the Negev}, \orgaddress{\city{Beer Sheva}, \postcode{8410501}, \country{Israel}}}

\affil[2]{\orgdiv{Department of Electrical and Computer Engineering}, \orgname{McMaster University}, \orgaddress{\city{Hamilton}, \postcode{L8S 4K1}, \state{ON}, \country{Canada}}}


\abstract{
Motivated by applications in DNA storage, we study a setting in which strings are affected by tandem-duplication errors. In particular, we look at two settings: disjoint tandem-duplication errors, and equal-length tandem-duplication errors. We construct codes, with positive asymptotic rate, for the two settings, as well as for their combination. Our constructions are duplication-free codes, comprising codewords that do not contain tandem duplications of specific lengths. Additionally, our codes generalize previous constructions, containing them as special cases.
}

\keywords{Error-correcting codes, string-duplication systems, duplication-free words}


\pacs[MSC Classification]{68R15, 94B25, 94B35}

\maketitle

\section{Introduction}

Ever since the first experiments to show the possibility of storing information in DNA molecules~\cite{ChuGaoKos12,GolBerCheDesLepSipBri13}, and later in living organisms~\cite{ShiNivMacChu17}, the last decade has seen a flurry of activity in the area of coding for DNA storage. While storing information in DNA molecules outside of living organisms (in-vitro DNA storage) is somewhat denser, using DNA molecules within living organisms (in-vivo DNA storage) also enables in-vivo synthetic-biology methods and algorithms that need ``memory,'' allows watermarking genetically-modified organisms (GMOs), and enables labeling organisms in biological studies. Overall, it is estimated that DNA storage can be six orders of magnitude denser than electronic media.

In-vivo DNA storage schemes display a variety of errors. Some are common also in electronic storage and communication systems, such as substitution errors and insertions/deletions errors. However, \emph{duplication errors} are unique to DNA storage, and their study began with~\cite{JaiFarSchBru17a}.  Generally speaking, a duplication error copies a substring of the stored DNA sequence and places the copy somewhere else in the sequence. Several variations exist, motivated by biological mutation processes: tandem duplication (where the copy is placed next to its original location), interspersed duplication (where the copy is placed some distance from the original copy), reverse duplication (same as tandem duplication, but the new copy is placed in reverse order), and reverse-complement duplication (same as reverse duplication, but with the new copy being complemented as well). Error models may also differ by the possible lengths of the duplicated substrings, and the total number of duplication errors.

Interest in correcting duplication errors has expanded significantly. Codes correcting any number of tandem duplications were studied in~\cite{JaiFarSchBru17a,ZerEsmGul19,ZerEsmGul20}, reverse duplications in~\cite{ZerEsmGul20}, and reverse-complement duplications in~\cite{YohSch23}. Codes that correct only a fixed number of tandem duplications (sometimes just one) were studied in~\cite{Kov19,LenWacYaa19,GosPolVor23}, reverse duplications in~\cite{NguCaiSonImm22}, and reverse-complement duplications in~\cite{BenSch22}. Additionally, codes that correct mixtures of duplications, substitutions, as well as insertions and deletions, have been the focus of~\cite{TanYehSchFar20,TanFar21a,TanWanLouGabFar23}. We also mention related work on duplication errors, though not providing error-correcting codes, but rather studying the possible outcomes from multiple errors~\cite{FarSchBru16,JaiFarBru17,EliFarSchBru19,FarSchBru19,BenSch22,Eli24}.

In this paper, we focus on the setting where any number of tandem-duplication errors occur. This setting was the focus of~\cite{JaiFarSchBru17a,ZerEsmGul19}, however, apart from several ad-hoc constructions, the only parametric code construction was for the uniform tandem-duplication case, namely, the case where all the tandem duplications were of a fixed prescribed length $\ell\in\N$. In almost all cases, the constructed codes were based on strings containing no tandem duplications of certain lengths, called \emph{duplication-free codes}.

We extend the results of~\cite{JaiFarSchBru17a,ZerEsmGul19} by considering two new settings: disjoint tandem-duplication errors, and equal-length tandem-duplication errors. We provide parametric constructions for both cases, which are duplication-free codes with carefully chosen forbidden duplication lengths. In the latter case, our construction contains the parametric construction of~\cite{JaiFarSchBru17a} and one ad-hoc case as special cases. We also combine the two settings, constructing codes for disjoint equal-length tandem-duplication errors. All of our code constructions have positive asymptotic rate.

The paper is organized as follows. In Section~\ref{sec:prelim}, we present our notation, provide exact definitions for our settings, and place previous results in context. Next, in Section~\ref{sec:cons}, we give all our constructions and prove their correctness, as well as discuss how to decode, when possible efficiently. Finally, in Section~\ref{sec:conc} we summarize our results and mention a couple of open questions.

\section{Preliminaries}
\label{sec:prelim}

Throughout this paper, let $\Sigma$ denote a finite alphabet, the elements of which are called letters. We further assume $\abs{\Sigma}\geq 2$, i.e., the alphabet contains at least two letters. A string is a sequence of letters, $u=u_1 u_2 \dots u_n$, where $u_i\in\Sigma$ for all $i$. In that case we say the length of $u$ is $n$, and denote it by $\abs{u}=n$. The set of all strings of length $n$ over $\Sigma$ is denoted by $\Sigma^n$, and the set of all strings of finite length is denoted by $\Sigma^*$. The unique empty string (of length $0$) is denoted by $\varepsilon$. Note that $\varepsilon\in\Sigma^*$ always. The set of all strings of finite positive length is denoted by $\Sigma^+\eqdef \Sigma^*\setminus\set{\varepsilon}$. Given two strings, $u,v\in\Sigma^*$, we use $uv$ to denote their concatenation. If $t\in\N$ is a positive integer, then $u^t$ denotes a concatenation of $t$ copies of $u$.

We say $y\in\Sigma^\ell$ is an \emph{$\ell$-factor} (or an \emph{$\ell$-substring}) of $u$, if there exist $x,z\in\Sigma^*$ such that $u=xyz$. If the length of $y$ is unimportant we may just say that it is a factor of $u$. If $x=\varepsilon$, then we say that $y$ is an $\ell$-prefix of $u$. If $z=\varepsilon$, we say that $y$ is an $\ell$-suffix of $u$.

We shall sometimes find it useful to have a notation for the positions occupied by a certain factor. Thus, if $u=u_1 u_2 \dots u_n\in\Sigma^n$, with each $u_i\in\Sigma$, then we use brackets to denote the set of positions between them. More precisely, for all $1\leq i<j\leq n$ we define
\[
\pos( u_1 \dots u_{i-1} [ u_i \dots u_{j}] u_{j+1} \dots u_n) \eqdef \set*{i,i+1,\dots j}.
\]
We also define the \emph{midpoint} of the factor as
\[
\midp( u_1 \dots u_{i-1} [ u_i \dots u_{j}] u_{j+1} \dots u_n) \eqdef \frac{i+j}{2}.
\]
We note that when the length of the factor is odd, the midpoint is the index of the middle letter of the factor. When the length of the factor is even, it is not an integer, but rather the average of the indices of the two middle letters. Assume $i'<j'$ are positive integers. By abuse of notation we shall say
\[
\midp( u_1 \dots u_{i-1} [ u_i \dots u_{j}] u_{j+1} \dots u_n) \in \set*{i',i'+1,\dots, j'},
\]
if
\[
i' \leq \midp( u_1 \dots u_{i-1} [ u_i \dots u_{j}] u_{j+1} \dots u_n) \leq j'.
\]

We now introduce string-duplication systems. We generally follow the definitions and notation of~\cite{FarSchBru16}. The most basic component is a \emph{string-duplication rule}, which is nothing more than a function $T:\Sigma^*\to\Sigma^*$. A set of such rules is denoted by $\cT\subseteq {\Sigma^*}^{\Sigma^*}$. The choice of rules is mainly motivated by simulating biological processes that occur during DNA storage, such as tandem duplication, which we shall shortly define (see~\cite{FarSchBru16} for more examples).

Given a string $u\in\Sigma^*$, we say that $v\in\Sigma^*$ is a \emph{$t$-descendant} of $v$, if there exist $T_1,T_2,\dots,T_t\in\cT$, which are not necessarily distinct, such that $v=T_{t}(T_{t-1}(\dots T_1(u)\dots))$. We denote this relation by writing $u\der^t v$. If $t=1$ we simply write $u\der v$. If $t$ is unknown or unimportant, but finite, we write $u\der^* v$. For convenience, we define $\der^0$ to be the identity, namely, $u\der^0 u$. The set of all $t$-descendants of $u$ is denoted by
\[ D^t(u) \eqdef \set*{ v\in\Sigma^* ~:~ u\der^t v}.\]
In a similar fashion we define $D^*(u)$, which we call the \emph{descendant cone} of $u$. Thus, $D^*(u)$ is the reflexive transitive closure of $\cT$ acting on $u$.

In a setting such as in-vivo DNA storage, information may be stored as a DNA string $u\in\Sigma^n$. Biological processes may corrupt the stored information by applying a sequence of mutations on it from the set of all allowed mutations, $\cT$. Thus, when reading the information, instead of retrieving $u$, a corrupted version $v\in D^*(u)$ from the descendant cone of $u$ is obtained. A natural way of defining error-correcting codes for such a scenario was given in~\cite{JaiFarSchBru17a}. A \emph{duplication-correcting code} capable of correcting any number of duplication mutations is a subset $C\subseteq\Sigma^n$, such that for all distinct $u,u'\in C$ we have $D^*(u)\cap D^*(u')=\emptyset$. We say two distinct strings $v,v'\in\Sigma^*$ are \emph{confusable} if $D^*(v)\cap D^*(v')\neq\emptyset$. Thus, any two distinct codewords in the code $C$ are not confusable. We observe that the code implicitly depends on the possible duplications, $\cT$, which we sometimes emphasize by saying that $C$ is a duplication-correcting code \emph{with respect to $\cT$}.

With the code defined as above, we say the code has length $n$, and size $\abs{C}$. An important figure of merit for a code is its rate, which is defined by
\[ R(C) \eqdef \frac{1}{n}\log_{\abs{\Sigma}}\abs*{C}.\]
Given a sequence of codes, $\set{C_i}_{i\in\N}$, where $C_i$ has length $n_i$, and $n_{i+1}>n_i$ for all $i$, we are interested in the \emph{asymptotic rate} of the family, defined as $\limsup_{i\to\infty} R(C_i)$. We would like the asymptotic rate to be positive, and as high as possible, with the highest value being called the \emph{coding capacity} of the channel.

The object of interest in this paper is the \emph{tandem-duplication system}. The duplication rules for this system are all of the following form:
\begin{equation}
\label{eq:tandem}
T_{i,\ell}(x) \eqdef 
    uv^2 w \qquad \text{if $x=uvw$, $\abs{u}=i$, $\abs{v}=\ell$.}
\end{equation}
Thus, $T_{i,\ell}$ takes the $\ell$-factor of the input $x$, which is found after a prefix of length $i$, and duplicates that factor immediately after its original appearance in $x$. Note that $T_{i,\ell}$ is not defined on strings of length shorter than $\ell+i$. For a set of positive integers, $L\subseteq\N$, we then define
\[ \cT_L \eqdef \set*{T_{i,\ell} ~:~ i\geq 0, \ell\in L}.\]

Duplication-correcting codes capable of correcting an unbounded number of duplications have already been studied, and in most cases, the duplication rules are tandem duplications. The first case, considered in~\cite[Theorem 15]{JaiFarSchBru17a}, was $L=\set{\ell}$, $\ell\in\N$, a singleton. The code obtained was optimal (in size). Other cases considered were $L=\set{1,2}$ and $L=\set{1,2,3}$ (see~\cite[Theorems 27, 32]{JaiFarSchBru17a}), where for the case of $L=\set{1,2}$ the codes were optimal. These results were extended in~\cite[Theorem 1]{ZerEsmGul19} to $L=\set{2,3}$, in~\cite[Theorem 2]{ZerEsmGul19} to $L=\set{1,2,3,4}$, in~\cite[Theorem 4]{ZerEsmGul19} to $L=\set{1,\dots,5}$, and in~\cite[Theorem 5]{ZerEsmGul19} to $L=\set{1,\dots,\ell}$ for $5<\ell\leq 9$. All of these codes have positive asymptotic rate.

All the constructions from~\cite{JaiFarSchBru17a}, and one construction from~\cite{ZerEsmGul19}, used codes based on duplication-free codewords. Here, a string $u\in\Sigma^*$ is said to be \emph{$F$-duplication free} (also called \emph{irreducible}) if it does not contain a factor of the form $v^2$ where $\abs{v}\in F$, for some subset $F\subseteq\N$. We call $F$ the \emph{forbidden duplication-length set}. We can now formally define the codes we focus on in this paper:

\begin{definition}
\label{def:cf}
    Given $F\subseteq\N$ and $n\in\N$, the $F$-duplication-free code of length $n$ is defined as
    \[
    C_F \eqdef \set*{ x\in\Sigma^n ~:~ \nexists u,v,w\in \Sigma^* \text{ s.t. } x=uv^2 w, \abs*{v}\in F}.
    \]
\end{definition}

It is interesting to note that in the cases of $L=\set{\ell}$, $L=\set{1,2}$, $L=\set{1,2,3}$, and $L=\set{2,3}$, the constructions from~\cite{JaiFarSchBru17a,ZerEsmGul19} (up to a small modification that does not change the asymptotic rate\footnote{The codes $C_F$ in~\cite{JaiFarSchBru17a} contain all the duplication-free strings of length \emph{up to} $n$, with a proper padding to make them all of length $n$.}) have $F=L$. Namely, by forbidding certain duplication lengths from occurring in codewords, the resulting codes can correct any number of duplications of these lengths. As another side note, when $\abs{\Sigma}\geq 3$, the number of duplication-free strings, even for $F=\N$ (called \emph{square-free} strings), grows exponentially with $n$ (see~\cite[Theorem 6]{Ber05}), which guarantees a positive asymptotic rate for such codes.

\section{Code Construction}
\label{sec:cons}

In this section we shall construct codes that can correct an unbounded number of tandem duplications. The only known code construction that is parametric in its error-correction capabilities, is the duplication-free code, $C_F$, where $F=L=\set{\ell}$ is a singleton (see~\cite{JaiFarSchBru17a}). In what follows, we provide two main constructions with parametric error-correction capabilities. Both constructions choose $F$ and $L$, while also imposing a restriction on the occurrence of duplication errors. In the first scenario, the duplication errors are disjoint, and in the second scenario, the duplication errors are all of equal length. Our second construction generalizes $C_{\set{\ell}}$ and contains it as a special case. We conclude by combining the two scenarios.

We begin by introducing relevant relations between the locations of multiple duplications in a single string.

\begin{definition}
\label{def:disjoint}
Let $x\in\Sigma^*$ be a string that contains two tandem duplications, namely
\[ x=uvvw=u'v'v'w',\]
for some $u,u',w,w'\in\Sigma^*$ and $v,v'\in\Sigma^+$. We say the two duplications are \emph{disjoint} if
\[
\pos(u[vv]w)\cap\pos(u'[v'v']w')=\emptyset.
\]
\end{definition}

We also want to define a related concept of obtaining a string from another through disjoint duplication errors.

\begin{definition}
\label{def:disjointerror}
Let $x\in\Sigma^*$ be a string. We say \emph{$z\in\Sigma^*$ is obtained from $x$ using $t\in\N$ disjoint tandem duplications} if we can write
\begin{align*}
x &= x_1 v_1 x_2 v_2 x_3 \dots x_t v_t x_{t+1},\\
z &= x_1 v_1^2 x_2 v_2^2 x_3 \dots x_t v_t^2 x_{t+1},\\
\end{align*}
where $x_i\in\Sigma^*$ for all $1\leq i\leq t+1$, and $v_i\in\Sigma^+$ for all $1\leq i\leq t$.
\end{definition}

Obviously, if $z$ is obtained from $x$ using $t$ disjoint tandem duplications (Definition~\ref{def:disjointerror}, then $z$ contains $t$ pair-wise disjoint tandem duplications (Definition~\ref{def:disjoint}). However, the reverse direction does not necessarily hold. 

Another important concept is presented in the following definition:

\begin{definition}
\label{def:midcover}
Let $x\in\Sigma^*$ be a string that contains two tandem duplications, namely
\[ x=uvvw=u'v'v'w',\]
for some $u,u',w,w'\in\Sigma^*$ and $v,v'\in\Sigma^+$. We say that the duplication $vv$ \emph{mid-covers} the duplication $v'v'$ in $x$ if
\[
\midp(u'[v'v']w')\in \pos(u[vv]w),
\]
or equivalently, if both of the following requirements are satisfied:
\begin{enumerate}
\item
$\pos(u[vv]w)\cap\pos(u'[v']v'w')\neq\emptyset$
\item
$\pos(u[vv]w)\cap\pos(u'v'[v']w')\neq\emptyset$
\end{enumerate}
\end{definition}

\begin{example}
Consider the alphabet $\Sigma=\set{a,b,c}$, and the string
\[
ababcababcabcaabcbca.
\]
We look at the following tandem duplications, which we highlight by underlining them and placing a vertical line in their middle:
\begin{enumerate}
\item
$a\vv \underline{b\vv a\vv b\vv c\vv a}|\underline{b\vv a\vv b\vv c\vv a}\vv b\vv c\vv a\vv a\vv b\vv c\vv b\vv c\vv a$
\item
$a\vv b\vv a\vv b\vv c\vv a\vv b\vv \underline{a\vv b\vv c} | \underline{a\vv b\vv c}\vv a\vv a\vv b\vv c\vv b\vv c\vv a$
\item
$a\vv b\vv a\vv b\vv c\vv a\vv b\vv a\vv b\vv c\vv a\vv b\vv c\vv a\vv a\vv \underline{b\vv c} | \underline{b\vv c}\vv a$
\end{enumerate}
In this example, the first and third duplications are disjoint, as are the second and third. Additionally, the first duplication mid-covers the second, but the second does not mid-cover the first. We therefore note that the mid-covering relation is not symmetric.
\end{example}

\subsection{Correcting Disjoint Tandem-Duplication Errors}

We first consider a scenario in which codewords are affected by disjoint tandem-duplication errors, and devise a code that corrects them.

We denote the set of all strings obtained from $x\in\Sigma^*$ using any number of disjoint tandem duplications by $D^{\dis,*}(x)$. If we further want to restrict the lengths of each duplicated part to be from some set $L\subseteq\N$ (i.e., in Definition~\ref{def:disjointerror}, $\abs{v_i}\in L$ for all $i$), then we write $D^{\dis,*}_L(x)$. Obviously, $C\subseteq \Sigma^n$ is a code capable of correcting any number of disjoint tandem duplications with lengths from $L$, if and only if for any two distinct $x,y\in C$ we have
\[
D^{\dis,*}_L(x)\cap D^{\dis,*}_L(y)=\emptyset.
\]

Another useful notation is the following: For a set $L\subseteq\N$ we define
\[
L^\Delta \eqdef \set*{\abs*{\ell-\ell'} ~:~ \ell,\ell'\in L, \ell\neq \ell'}.
\]
We can now introduce our code construction for disjoint tandem-duplication errors.

\begin{theorem}
\label{th:disjoint}
Let $L\subseteq\N$ be a given set of duplication lengths, and set $F=L\cup L^\Delta$. Then the code $C_F$, of length $n\in\N$, can correct any number of disjoint tandem duplications with respect to $\cT_L$. Namely,
for all distinct $x,y\in C_F$ we have
\[
D_L^{\dis,*}(x)\cap D_L^{\dis,*}(y)=\emptyset.
\]
\end{theorem}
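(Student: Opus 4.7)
The plan is to proceed by strong induction on $\abs{z}$. In the base case, when $z$ contains no tandem-duplication square of length in $L$, neither factorization can include any duplications, so $x=y=z$. For the inductive step, I would write
\begin{align*}
z &= x_1 v_1^2 x_2 \cdots v_t^2 x_{t+1}, \quad x = x_1 v_1 x_2 \cdots v_t x_{t+1},\\
z &= y_1 w_1^2 y_2 \cdots w_s^2 y_{s+1}, \quad y = y_1 w_1 y_2 \cdots w_s y_{s+1},
\end{align*}
with all $\abs{v_i},\abs{w_j}\in L$, and assume, without loss of generality, that $t\geq 1$ and that $v_1^2$ starts no later than any $y$-duplication in $z$. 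Let $v_1^2$ occupy the positions $[a,a+2\ell-1]$ in $z$, where $\ell=\abs{v_1}\in L$.

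The key intermediate step would be to establish that some $y$-duplication $w_j^2$ occupies exactly those positions $[a,a+2\ell-1]$ with $\abs{w_j}=\ell$. Granted this, one has $w_j=v_1$, and removing the pair $(v_1^2,w_j^2)$ from the respective factorizations yields a strictly shorter $z'\in D_L^{\dis,*}(x)\cap D_L^{\dis,*}(y)$, so the induction hypothesis forces $x=y$.

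I would prove the matching by contradiction, through a case analysis on how $y$-duplications interact with $[a,a+2\ell-1]$. If no $y$-duplication overlaps this interval, then $v_1^2$ appears intact as a factor of $y$, giving a forbidden square of length $\ell\in L\subseteq F$ and contradicting $y\in C_F$. If some $w_j^2$ lies strictly inside one of the two halves $[a,a+\ell-1]$ or $[a+\ell,a+2\ell-1]$, then since each half of $v_1^2$ equals $v_1$ and $v_1$ is a factor of $x$, the string $x$ contains $w_j^2$, a square of length $\abs{w_j}\in L\subseteq F$, again contradicting membership in $C_F$. The remaining case is that some $w_j^2$ straddles the midpoint of $v_1^2$, or partially sticks out of $[a,a+2\ell-1]$ on the right. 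There, the $\ell$-periodicity coming from $v_1^2$ and the $m$-periodicity coming from $w_j^2$ (with $m=\abs{w_j}$) hold simultaneously on their overlap, and I would combine them, in the spirit of Fine--Wilf, to extract a shorter forced period of length $\abs{\ell-m}\in L^\Delta\subseteq F$ in a region of $z$ which, after undoing either $v_1^2$ or $w_j^2$, contributes a square of that length to $x$ or to $y$.

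The hard part will be this last overlap case. The forced shorter square of length $\abs{\ell-m}$ need not sit entirely inside $v_1$ or inside $w_j$: it may straddle the boundary between a duplicated region and its neighbouring material (as already happens in small examples, e.g.\ $\ell=5$, $m=3$). One therefore has to track carefully how the combined $\ell$- and $m$-periodicities propagate across the deduplication step, and verify that the resulting square actually lives in $x$ or in $y$ rather than merely in $z$. I expect this analysis to split by the relative sizes of $\ell$ and $m$ (e.g.\ $2m\leq\ell$ versus $\ell/2<m<\ell$), with each sub-case handled through the combined constraints imposed by $v_1v_1$ and $w_jw_j$ on their shared range of positions.
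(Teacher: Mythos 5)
Your overall strategy (induct on $\abs{z}$, match the leftmost $x$-duplication to a $y$-duplication, cancel the pair) is reasonable in outline, but two of its load-bearing steps do not hold up. First, the ``key intermediate step'' is false as stated: no $w_j^2$ need occupy exactly the positions $[a,a+2\ell-1]$. If a stretch of $z$ has period $\ell$ over more than $2\ell$ positions --- say $z$ contains $abcabca$ with $\ell=3$ --- then the $x$-factorization may place its length-$3$ duplication at offset $0$ and the $y$-factorization at offset $1$; both de-duplicate to $abca$, which contains no forbidden square, so nothing about membership in $C_F$ rules this configuration out. Your contradiction-based case analysis cannot dispose of it either: it lands in the ``straddles the midpoint'' case with $m=\ell$, where the would-be forced period has length $\abs{\ell-m}=0\notin L^\Delta$. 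What is true, and what the paper proves as Lemma~\ref{lem:eqmidcover}, is the weaker statement that two equal-length duplications whose midpoints each lie among the other's positions de-duplicate to the \emph{same} string; that weaker statement still supports your cancellation step, but it is a claim requiring proof, not a degenerate instance of the exact matching you assert.

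Second, the case you yourself flag as ``the hard part'' --- unequal lengths with overlap --- is precisely the mathematical core of the theorem (the paper's Lemma~\ref{lem:neqmidcover}), and you have not supplied it. Moreover, your case split is organized around overlap/containment, which is too weak a hypothesis for the periodicity argument: if $w_j^2$ pokes only one or two positions into the right end of $[a,a+2\ell-1]$, the common region is too short for any Fine--Wilf-type combination of the two periods, and no square of length $\abs{\ell-m}$ is forced. The paper instead organizes everything around \emph{mid-covering} (the midpoint of one duplication lying among the positions of the other): Lemma~\ref{lem:disjointconfusable} first shows that every $x$-duplication must mid-cover some $y$-duplication (otherwise one copy of each $v_j$ can be erased without touching it, leaving a forbidden square intact inside $y$), that this pairing is an order-preserving bijection, and hence that paired duplications mid-cover each other; only then do the equal-length and unequal-length lemmas apply, and the same structure is what guarantees that cancelling a matched pair leaves the remaining duplications intact so that the shorter common descendant is legitimate. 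You would need to restructure your case analysis around this stronger relation, and then actually carry out the periodicity computation showing that the length-$(\ell-m)$ square survives de-duplication into $x$ or $y$, before the proof is complete.
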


Before proving this theorem, we require a few lemmas concerning the properties of confusable strings under the theorem's assumptions.

\begin{lemma}
\label{lem:disjointconfusable}
Consider the setting of Theorem~\ref{th:disjoint}. If $x,y\in C_F$ are confusable, then all the following hold:
\begin{enumerate}
\item
We can write
\begin{align*}
x&= x_1 u_1 x_2 u_2 \dots x_t u_t x_{t+1} \\
y&= y_1 v_1 y_2 v_2 \dots y_t v_t y_{t+1}
\end{align*}
where, $t\in\N$, and for all $i$, $x_i,y_i,u_i,v_i\in\Sigma^*$ and $\abs{u_i},\abs{v_i}\in L$.
\item
The following descendants of $x$ and $y$ are equal,
\[
x_1 u_1^2 x_2 u_2^2 \dots x_t u_t^2 x_{t+1}
=
y_1 v_1^2 y_2 v_2^2 \dots y_t v_t^2 y_{t+1}.
\]
Denote this joint descendant as $z$.
\item
For all $1\leq i\leq t$, the duplications $u_i u_i$ and $v_i v_i$ in $z$, mid-cover each other. Additionally, $u_i u_i$ (respectively, $v_i v_i$) does not mid-cover any $v_j v_j$ (respectively, $u_j u_j$) for $i\neq j$.
\end{enumerate}
\end{lemma}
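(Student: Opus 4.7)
The plan is to unpack confusability directly. Since $x$ and $y$ are confusable there is some $z\in D_L^{\dis,*}(x)\cap D_L^{\dis,*}(y)$, and Definition~\ref{def:disjointerror} produces two witnessing presentations
\begin{align*}
x &= x_1 u_1 \cdots u_s x_{s+1}, & z &= x_1 u_1^2 \cdots u_s^2 x_{s+1}, \\
y &= y_1 v_1 \cdots v_r y_{r+1}, & z &= y_1 v_1^2 \cdots v_r^2 y_{r+1},
\end{align*}
with $\abs{u_i},\abs{v_j}\in L$. Each $u_i u_i$ and each $v_j v_j$ has a definite range of positions in $z$, and by disjointness the $u$-ranges are pairwise disjoint, as are the $v$-ranges. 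Items~(1) and~(2) amount to this setup, so the lemma reduces to constructing an order-preserving bijection between the two families that realizes item~(3).

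The crux is the claim: \emph{if $u_i u_i$ and $v_j v_j$ overlap in $z$ then they mid-cover each other and $\abs{u_i}=\abs{v_j}$}. Write $\ell=\abs{u_i}$ and $\ell'=\abs{v_j}$. I would prove the claim by contradiction through a short case analysis. If some direction of mid-cover fails, say $u_i u_i$ does not mid-cover $v_j v_j$, then $\midp(v_j v_j)$ lies outside $\pos(u_i u_i)$, so one copy of $v_j$ overhangs an endpoint of $u_i u_i$. Equating the two copies of $v_j$ in $z$ letter by letter then matches a suffix (or prefix) of $u_i$ with the letters of $z$ just beyond $u_i u_i$, and undoing $u_i u_i$ to recover $x$, or undoing $v_j v_j$ to recover $y$, exhibits a square of half-length $\ell'$ in $x$ or $\ell$ in $y$; both lengths lie in $L\subseteq F$, contradicting $x,y\in C_F$. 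In the remaining configuration, mutual mid-cover with $\ell\ne\ell'$, the same equality between copies of $v_j$ forces $u_i$ to have period $\abs{\ell-\ell'}$ along a matching prefix--suffix pair, and tracing this periodicity into $y$ after undoing $v_j v_j$ produces a square of half-length $\abs{\ell-\ell'}\in L^\Delta\subseteq F$ inside $y$, again contradicting $y\in C_F$.

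Given the claim, the pairing is forced. Each $u_i u_i$ is mid-covered by at most one $v_j v_j$, because mid-covering $u_i u_i$ requires $v_j v_j$ to contain both of the two integer positions adjacent to $\midp(u_i u_i)$, which cannot happen simultaneously for two disjoint $v$-duplications. If no $v_j v_j$ mid-covered $u_i u_i$, then by the claim $u_i u_i$ would be disjoint from every $v_j v_j$, hence sit entirely inside some block $y_k$, placing the square $u_i^2$ of length $2\ell$ directly inside $y$ --- forbidden since $\ell\in L\subseteq F$. So each $u_i u_i$ pairs with a unique $v_j v_j$, and symmetrically each $v_j v_j$ pairs with a unique $u_i u_i$. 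Since both families list their duplications left-to-right in $z$, the bijection is order-preserving; relabeling yields $s=r=t$ and the mutual mid-covers stated in~(3), with the absence of cross mid-covers a direct consequence of the same ``at most one'' observation.

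The main obstacle is the mutual-mid-cover-with-unequal-lengths subcase of the central claim --- the only place where the $L^\Delta$ piece of $F$ is essential. One must identify the exact $\abs{\ell-\ell'}$-periodicity that the mid-cover constraint forces on $u_i$ and then trace it through the removal of $v_j v_j$ to pin down a genuine square factor in $y$; all other bad configurations die to simpler forbidden-square arguments using only $L$.
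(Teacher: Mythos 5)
There is a genuine gap, concentrated in your central claim that \emph{any} overlapping pair $u_iu_i$, $v_jv_j$ must mid-cover each other and have equal lengths. First, the ``overlap implies mutual mid-cover'' half appears to be false, not just unproven: in the configuration the lemma actually ends up with, $u_iu_i$ and $v_iv_i$ mid-cover each other with equal length $\ell_i$ and some offset $\delta_i$ with $0<\delta_i<\ell_i$, so $v_iv_i$ can protrude $\delta_i$ positions past the right end of $u_iu_i$ and overlap $u_{i+1}u_{i+1}$ while neither mid-covers the other; nothing in the hypotheses excludes this. Second, even where your case analysis aims at a true conclusion, the contradictions are not closed. To exhibit a forbidden square ``in $x$'' or ``in $y$'' you must de-duplicate \emph{all} duplications of the other family, not just the one pair under discussion, and you never verify that the square you found survives those removals. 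The paper's argument for the existence of a mid-covered partner is inherently global: assuming $u_iu_i$ mid-covers \emph{no} $v_jv_j$, every $v_jv_j$ has a full copy of $v_j$ disjoint from $u_iu_i$, and deleting one such copy for every $j$ simultaneously yields $y$ with $u_iu_i$ intact. That argument needs no pairwise overlap analysis at all, and it is the step your local reformulation cannot replace.

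The ``equal lengths'' half of your claim has a dependency inversion. In the paper, unequal lengths of a mutually mid-covering pair are ruled out only in the proof of Theorem~\ref{th:disjoint}, via Lemma~\ref{lem:neqmidcover}, and crucially that step uses part~(3) of the present lemma: the induced square $v''v''$ lies inside $\pos(\dots[u_iu_i]\dots)$, and one needs to know that $u_iu_i$ mid-covers no other $v_{i'}v_{i'}$ in order to strip away the remaining $v_{i'}$'s without destroying $v''v''$. You are invoking exactly this periodicity argument \emph{inside} the proof of the claim that is supposed to establish the pairing, before the disjointness bookkeeping is available. The fix is to prove the lemma as stated --- mutual mid-covering and the counting/ordering argument only, with no length comparison --- and defer both Lemma~\ref{lem:eqmidcover}-type and Lemma~\ref{lem:neqmidcover}-type analysis to the proof of the theorem. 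Your injectivity observation (two disjoint $v$-duplications cannot both contain the two positions straddling $\midp(u_iu_i)$) and the left-to-right relabeling at the end are sound and match the paper.
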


\begin{proof}
By definition, if $x$ and $y$ are confusable, then there exists
\[
z\in D_L^{\dis,*}(x)\cap D_L^{\dis,*}(y).
\]
Since $z$ is obtained from both $x$ and $y$ through a sequence of disjoint tandem duplications whose lengths are from $L$, we may write
\begin{align*}
x&= x_1 u_1 x_2 u_2 \dots x_t u_t x_{t+1} \\
y&= y_1 v_1 y_2 v_2 \dots y_s v_s y_{s+1} \\
z &= x_1 u_1^2 x_2 u_2^2 \dots x_t u_t^2 x_{t+1}
=
y_1 v_1^2 y_2 v_2^2 \dots y_s v_s^2 y_{s+1},
\end{align*}
for some $t,s\in\N$, and for all $i$, $x_i,y_i,u_i,v_i\in\Sigma^*$ and $\abs{u_i},\abs{v_i}\in L$. We note that $s,t\geq 1$ since by definition $x\neq y$ but of the same length $n$, and so each requires at least one duplication to arrive at the same $z$.

Next, we show that each duplication $u_i u_i$ mid-covers some duplication $v_j v_j$ in $z$. Assume to the contrary that some $u_i u_i$ does not mid-cover any $v_j v_j$ in $z$. Thus, for each $1\leq j\leq s$,
\[
\pos(\dots[u_i u_i]\dots) \cap \pos(\dots[v_j]v_j\dots)=\emptyset
\]
or
\[
\pos(\dots[u_i u_i]\dots) \cap \pos(\dots v_j[v_j]\dots)=\emptyset
\]
where the dots should be completed to obtain $z$. Now, start with $z$, and for each $1\leq j\leq s$, take $v_j v_j$ and erase one copy of $v_j$ whose positions do not intersect those of $u_i u_i$ (if both copies satisfy this property, pick one arbitrarily). The resulting string is
\[ y_1 v_1 y_2 v_2 \dots y_s v_s y_{s+1} = y.\]
However, none of the positions of $u_i u_i$ were erased, and so $u_i u_i$ is a factor of $y$. Since $\abs{u_i}\in L\subseteq F$, we have that $y\in C_F$ but also contains a factor that is a tandem duplication of length in $F$. This contradicts Definition~\ref{def:cf}. It follows that each duplication $u_i u_i$ mid-covers some $v_j v_j$ in $z$. A symmetric argument shows each duplication $v_j v_j$ mid-covers some $u_i u_i$ in $z$.

Assume the duplication $u_i u_i$ mid-covers $v_j v_j$ in $z$. We note that $u_{i'} u_{i'}$, with $i\neq i'$, cannot also mid-cover the same $v_j v_j$ in $z$. This follows from:
\begin{align*}
&\midp(\dots[v_j v_j]\dots)\in\pos(\dots[u_i u_i]\dots) && \text{($u_i u_i$ mid-covers $v_j v_j$)}\\
&\pos(\dots[u_i u_i]\dots)\cap\pos(\dots[u_{i'}u_{i'}]\dots)=\emptyset && \text{($u_i u_i$ and $u_{i'}u_{i'}$ are disjoint)}
\end{align*}

and so
\[
\midp(\dots[v_j v_j]\dots)\not\in\pos(\dots[u_{i'} u_{i'}]\dots).
\]
Since each $u_i u_i$ mid-covers some $v_j v_j$ in $z$, and distinct $u_i u_i$ mid-cover distinct $v_j v_j$ in $z$, we have $t\leq s$. A symmetric argument gives $s\leq t$, implying $s=t$.

Combining all of the above, each $u_i u_i$ mid-covers a unique $v_j v_j$ in $z$, and vice versa. Consider the duplication $u_1 u_1$. Assume to the contrary it mid-covers $v_j v_j$ for some $j\geq 2$. This will leave $v_1 v_1$ not mid-covered by any $u_i u_i$, $i\geq 2$, since these appear to the right of $u_1 u_1$ in $z$ whereas $v_1 v_1$ appears to the left of $v_j v_j$, and the duplications are disjoint. Hence, $u_1 u_1$ must cover $v_1 v_1$ in $z$, and symmetrically, $v_1 v_1$ mid-covers $u_1 u_1$ in $z$. Iterating over this argument proves that $u_i u_i$ and $v_i v_i$ mid-cover each other, for all $1\leq i\leq t$, which completes the proof.
\end{proof}

\begin{lemma}
\label{lem:eqmidcover}
Assume a string $z\in\Sigma^*$ can be written in two ways,
\[
z = uvvw = u'v'v'w',
\]
where $u,v,w,u',v',w'\in\Sigma^*$, and $\abs{v}=\abs{v'}>0$. If $vv$ and $v'v'$ mid-cover each other in $z$, then
\[ uvw=u'v'w',\]
namely, removing the duplications results in the same string.
\end{lemma}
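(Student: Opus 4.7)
The plan is to place both factorizations of $z$ on the same integer axis, use the mid-cover hypothesis to force the two duplications to be close to each other, and then check $uvw = u'v'w'$ directly by comparing letters of $z$.

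First, I would set $\ell \eqdef \abs{v} = \abs{v'}$, $i \eqdef \abs{u}$, $i' \eqdef \abs{u'}$, and $n \eqdef \abs{z}$, so that $vv$ and $v'v'$ occupy the integer intervals $\set{i+1,\dots,i+2\ell}$ and $\set{i'+1,\dots,i'+2\ell}$ in $z$, with midpoints $i+\ell+\tfrac{1}{2}$ and $i'+\ell+\tfrac{1}{2}$ respectively. Unpacking the midpoint-in-position abuse of notation, the mid-cover hypothesis becomes
\[
i+1 \leq i' + \ell + \tfrac{1}{2} \leq i + 2\ell \quad\text{and}\quad i'+1 \leq i + \ell + \tfrac{1}{2} \leq i' + 2\ell,
\]
each forcing $\abs{i-i'} \leq \ell - \tfrac{1}{2}$; integrality of $i,i'$ then sharpens this to $\abs{i-i'} \leq \ell - 1$. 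I would then assume WLOG that $i \leq i'$ and set $d \eqdef i' - i \in \set{0,1,\dots,\ell-1}$.

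Next, I would record the letter-wise equalities $z_k = z_{k+\ell}$ for $k \in \set{i+1,\dots,i+\ell}$ coming from the $vv$ duplication. Writing each of $uvw$ and $u'v'w'$ as $z$ with its leftmost duplicate copy deleted gives
\[
uvw = z_1 \cdots z_i \, z_{i+\ell+1} \cdots z_n, \qquad u'v'w' = z_1 \cdots z_{i+d} \, z_{i+d+\ell+1} \cdots z_n,
\]
both of length $n-\ell$. These strings share the prefix $z_1 \cdots z_i$ and, after the first $i+d$ positions, read identically from $z_{i+d+\ell+1} \cdots z_n$; the only potential discrepancy lies in the $d$ positions in between, where one string reads $z_{i+1},\dots,z_{i+d}$ and the other reads $z_{i+\ell+1},\dots,z_{i+\ell+d}$. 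These two blocks match letter by letter precisely by the $vv$-equalities applied at indices $i+1,\dots,i+d$, which is a valid range because $d \leq \ell - 1 < \ell$. Hence $uvw = u'v'w'$.

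The only point requiring any care is the half-integer arithmetic needed to extract the sharp bound $\abs{i-i'} \leq \ell - 1$ from the mid-cover hypothesis; once this bound is in hand, everything else is bookkeeping about which positions of $z$ the two deletions read from, and the $vv$-structure alone closes the gap.
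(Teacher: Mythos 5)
Your proof is correct and follows essentially the same route as the paper's: both arguments use the mid-cover hypothesis to bound the offset $d=\abs{u'}-\abs{u}$ strictly below $\ell$ and then invoke the period-$\ell$ structure of $vv$ to match the $d$ letters where the two deletions disagree (the paper packages this index bookkeeping as a decomposition $z=z_1\cdots z_7$ with $z_2=z_4=z_6$ and $z_3=z_5$). Your derivation of $\abs{i-i'}\leq\ell-1$ via half-integer midpoints is in fact slightly more explicit than the paper's bare assertion that $0<\delta<\abs{v}$.
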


\begin{proof}
Assume that $z$ satisfies the two decompositions into factors, as above. If $\abs{u}=\abs{u'}$, then $u=u'$, $v=v'$, $w=w'$, and the claim trivially follows. Let us now assume, w.l.o.g., that $\abs{u}<\abs{u'}$. Define $\delta\eqdef\abs{u'}-\abs{u}$. By our assumption that $vv$ and $v'v'$ mid-cover each other in $z$, we have $0<\delta<\abs{v}$.

We decompose $z$ into the following factors (consult Fig.~\ref{fig:eqmidcover} throughout the rest of the proof),
\[ z = z_1 z_2 z_3 z_4 z_5 z_6 z_7 \]
where
\begin{align*}
\abs*{z_1} & = \abs{u} & \abs*{z_2} & = \delta & \abs*{z_3} &= \abs*{v}-\delta & \abs*{z_4} &=\delta \\
\abs*{z_5} &= \abs*{v}-\delta & \abs*{z_6}&=\delta & \abs*{z_7}&=\abs*{w}-\delta.
\end{align*}
Thus,
\begin{align*}
u &= z_1 & v &= z_2 z_3 = z_4 z_5 & w & = z_6 z_7 \\
u' &= z_1 z_2 & v'&= z_3 z_4 = z_5 z_6 & w' &= z_7.
\end{align*}

\begin{figure}
\begin{center}
\begin{overpic}
[width=\textwidth]
{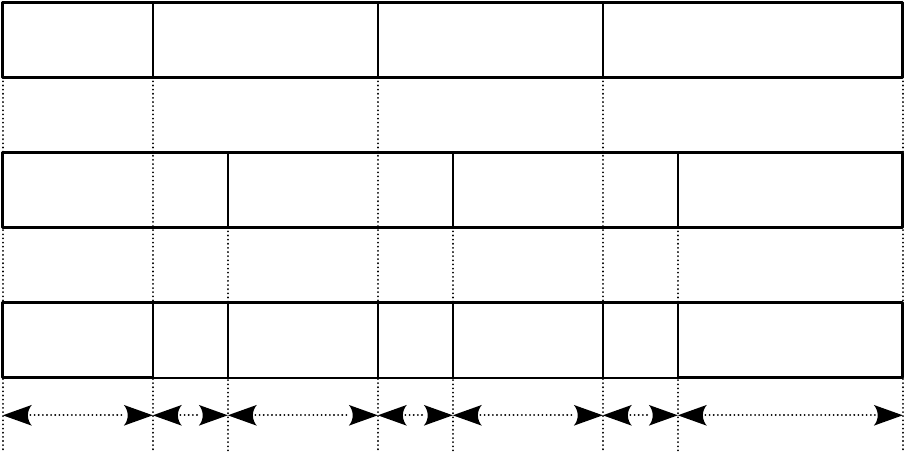}
\put(8,45){$u$}
\put(28,45){$v$}
\put(53,45){$v$}
\put(83,45){$w$}
\put(13,28.5){$u'$}
\put(37,28.5){$v'$}
\put(62,28.5){$v'$}
\put(86,28.5){$w'$}
\put(8,12){$z_1$}
\put(20,12){$z_2$}
\put(32,12){$z_3$}
\put(44,12){$z_4$}
\put(56,12){$z_5$}
\put(69,12){$z_6$}
\put(86,12){$z_7$}
\put(7,0){$\abs{u}$}
\put(20,0){$\delta$}
\put(29.5,0){$\abs{v}-\delta$}
\put(45,0){$\delta$}
\put(54,0){$\abs{v}-\delta$}
\put(70,0){$\delta$}
\put(83,0){$\abs{w}-\delta$}
\end{overpic}
\end{center}
\caption{The decomposition of $z$ into factors in the proof of Lemma~\ref{lem:eqmidcover}.}
\label{fig:eqmidcover}
\end{figure}

We now have
\[
z_2 = z_4 = z_6,
\]
where the first equality is due to the $vv$ duplication, and the second equality is due to the $v'v'$ duplication. We also have
\[
z_3=z_5,
\]
due to either the $vv$ or the $v'v'$ duplications. It follows that
\[
uvw = z_1 z_2 z_3 z_2 z_7 = u' v' w',
\]
which completes the proof.
\end{proof}

\begin{lemma}
\label{lem:neqmidcover}
Assume a string $z\in\Sigma^*$ can be written in two ways,
\[
z = uvvw = u'v'v'w',
\]
where $u,v,w,u',v',w'\in\Sigma^*$, and $\abs{v}>\abs{v'}>0$. If $vv$ and $v'v'$ mid-cover each other in $z$, then $u'v'w'$ contains a factor $v''v''$, where $\abs{v''}=\abs{v}-\abs{v'}$.
\end{lemma}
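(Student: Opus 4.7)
The plan is to construct $v''$ explicitly as a $k$-factor of $v$ where $k := \abs{v} - \abs{v'}$, and exhibit $v''v''$ as a factor of $u'v'w'$. Without loss of generality I assume $\abs{u} \le \abs{u'}$: if instead $\abs{u} > \abs{u'}$, reversing every string gives $\mathrm{rev}(z)$ with decompositions $UVVW = U'V'V'W'$ (where the unprimed and primed factors are the reverses of the original ones) satisfying the same hypotheses, with $\abs{U} < \abs{U'}$ by a length count, so any $V''V''$ factor of $\mathrm{rev}(u'v'w')$ pulls back to $\mathrm{rev}(V'')\mathrm{rev}(V'') = v''v''$ in $u'v'w'$. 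Set $d := \abs{u'}-\abs{u} \ge 0$; the mid-cover hypothesis gives $d \le \abs{v}-1$ together with $d + 2\abs{v'} \ge \abs{v}$. I take the representative of $u'v'w'$ obtained from $z$ by excising the positions of the second copy of $v'$ (by $v_1' = v_2'$, both excisions give the same string).

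Next, I split on whether $v_1'$ sits inside $v_1$ or straddles the $v_1$–$v_2$ seam of $z$, i.e., on $d + \abs{v'} \le \abs{v}$ vs.\ $d + \abs{v'} > \abs{v}$. In the first case, the lower bound $d + 2\abs{v'} \ge \abs{v}$ forces $v_2'$ to straddle the seam, and I set $v'' := v[1..k]$, the $k$-prefix of $v$; the claim is that the $2k$-factor of $u'v'w'$ starting at position $\abs{u}+1$ equals $v''v''$. In the second case, I set $v'' := v[d-k+1..d]$ and claim that the $2k$-factor of $u'v'w'$ starting at position $\abs{u}+d-k+1$ equals $v''v''$. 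This second case has a further dichotomy—whether $v_2'$ lies inside $v_2$ or protrudes past $vv$ into $w$—but the two sub-cases are verified the same way.

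In every case the claim reduces to a finite list of character equalities of the form $v[i] = v[j]$, plus (only in the sub-case where $v_2'$ protrudes into $w$) a single equality of the form $v[i] = w[j]$. All of these equalities are consequences of the single identity $v_1' = v_2'$: writing out $v_1'$ and $v_2'$ as explicit concatenations of substrings of $v$ (and, where applicable, an initial segment of $w$), and equating them position-by-position, produces exactly the equalities that are needed. The hard part will be the index bookkeeping—tracking how the claimed $2k$-factor of $u'v'w'$ decomposes across the excision seam in $u'v'w'$ and the $v_1$–$v_2$ boundary of $z$, and aligning those decompositions with the equalities supplied by $v_1' = v_2'$—but once the ranges are tracked carefully in each case, the verification is mechanical.
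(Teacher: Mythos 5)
Your construction is correct and is essentially the paper's own argument: in your second case the factor $v''=v[d-k+1..d]$ placed at position $\abs{u}+d-k+1$ is exactly the paper's $v''$ in its decomposition $v'=ab$, $v=bv''a$ (with $\abs{b}=d-k$, $v''$ a suffix of $u'$ and a prefix of $v'w'$), and your first case is its mirror image. The only difference is presentational: the paper's observation that $a$ is a suffix of $v$ and $b$ is a prefix of $v$ makes the index bookkeeping you defer immediate and collapses your sub-cases, but the letter equalities you would extract from $v_1'=v_2'$ do all check out.
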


\begin{proof}
Assume $z$ satisfies the two decompositions into factors, as above. We further assume that the mid-point of the $v'v'$ duplication is not to the left of the mid-point of the $vv$ duplication. Namely, we assume that
\begin{equation}
\label{eq:rightmid}
\abs{u}+\abs{v}\leq \abs{u'}+\abs{v'}
\end{equation}
(see Fig.~\ref{fig:neqmidcover} throughout the rest of the proof). The proof of the symmetric case proceeds as a mirror image of the rest of this proof.

\begin{figure}
\begin{center}
\begin{overpic}
[width=\textwidth]
{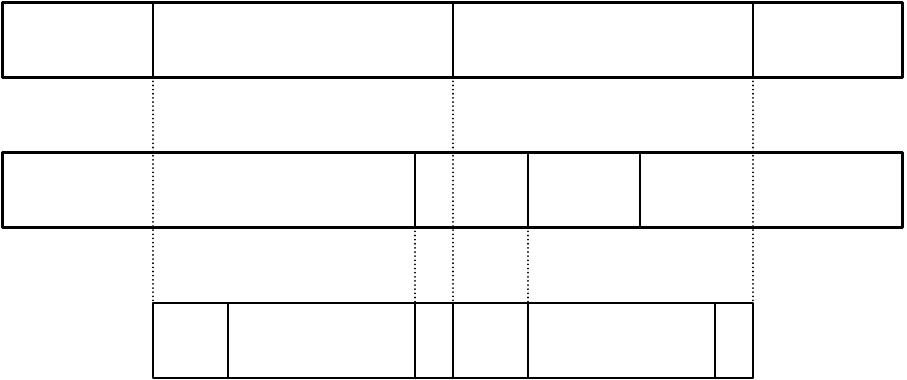}
\put(8,37){$u$}
\put(33,37){$v$}
\put(66,37){$v$}
\put(91,37){$w$}
\put(23,20){$u'$}
\put(51,20){$v'$}
\put(64,20){$v'$}
\put(84,20){$w'$}
\put(20.5,3.5){$b$}
\put(35,3.5){$v''$}
\put(47,3.5){$a$}
\put(53.5,3.5){$b$}
\put(68,3.5){$v''$}
\put(80,3.5){$a$}
\end{overpic}
\end{center}
\caption{The decomposition of $z$ into factors in the proof of Lemma~\ref{lem:neqmidcover}.}
\label{fig:neqmidcover}
\end{figure}

We further decompose $v'$ according to the mid-point of the $vv$ duplication with relation to the first copy of $v'$. More precisely, we write
\[ v' = ab,\]
where $\abs{a}=\abs{u}+\abs{v}-\abs{u'}$. The latter is positive since $v'v'$ mid-covers $vv$. We then have $\abs{b}=\abs{u'}+\abs{v'}-\abs{u}-\abs{v}$, which is non-negative by~\eqref{eq:rightmid}.

We now have that $a$ is a suffix of the left copy of $v$. We also have that $b$ is a prefix of the right copy of $v$. It must therefore be a prefix of the left copy of $v$ as well. Looking at the left copy of $v$ we can write
\[ v = bv''a,\]
where $\abs{v''}=\abs{v}-\abs{v'}$, which is positive by the assumptions of the theorem. Again, the right copy of $v$ must also equal $bv''a$.

Our final observation is that $v''$ is both a suffix of $u'$, and a prefix of $v'w'$. Thus, the string $u'v'w'$ contains the factor $v''v''$ as claimed.
\end{proof}

We are now in a position to prove Theorem~\ref{th:disjoint}.

\begin{proof}[Proof of Theorem~\ref{th:disjoint}]
Assume to the contrary that $C_F$ is not a code as desired. Hence, there exist two (distinct) confusable strings, $x,y\in C_F$, with the properties described in Lemma~\ref{lem:disjointconfusable}. As in the notation of Lemma~\ref{lem:disjointconfusable}, let $u_i u_i$ be the tandem duplications corrupting $x$, and $v_i v_i$ be those corrupting $y$, for all $1\leq i\leq t$.

Assume, first, that there exists $1\leq i\leq t$, such that $\abs{u_i}\neq \abs{v_i}$. W.l.o.g., assume $\abs{u_i}>\abs{v_i}$. We take $z$ and remove from it one copy of $v_i$, resulting in a string $z'$. By Lemma~\ref{lem:neqmidcover}, $z'$ contains a factor $v'' v''$. Furthermore, $\abs{v''}\in L^\Delta\subseteq F$. We also observe that, by the proof of Lemma~\ref{lem:neqmidcover}, the factor $v'' v''$ results only from positions contained in $\pos(\dots[u_i u_i]\dots)$ in $z$. Since by Lemma~\ref{lem:disjointconfusable}, $u_i u_i$ does not mid-cover any $v_{i'}v_{i'}$ for $i\neq i'$, we can further remove from $z'$ one copy of $v_{i'}$, for all $1\leq i'\leq t$, $i'\neq i$, in positions disjoint from $\pos(\dots[u_i u_i]\dots)$ in $z$. The resulting string is just $y$, with the $v'' v''$ factor remaining intact. Since $\abs{v''}\in F$, we have a contradiction to $y\in C_F$ being a codeword.

We assumed $\abs{u_i}\neq \abs{v_i}$ for some $i$, and reached a contradiction. Hence, to avoid a contradiction, for all $1\leq i\leq t$ we must have $\abs{u_i}=\abs{v_i}$. However now, by repeatedly using Lemma~\ref{lem:eqmidcover}, removing the duplicated parts $u_i$ (resulting in $x$) or removing the duplicated parts $v_i$ (resulting in $y$) leaves us with the same string, namely, $x=y$. But that contradicts $x\neq y$ required for $x$ and $y$ to be confusable. Hence, $C_F$ does not contain a pair of confusable codewords, and is therefore capable of correcting any number of disjoint tandem duplications with lengths from $L$.
\end{proof}

We remark on two specific cases in particular. First, assume we are interested in correcting tandem duplications of length $\ell\in\N$ only. This case, called \emph{uniform tandem duplication}, has been first studied in~\cite{JaiFarSchBru17a}. Up to a slight modification (that does not affect the asymptotic code rate), the code constructed in~\cite[Theorem 15]{JaiFarSchBru17a} is $C_{\set{\ell}}$. In the setting of this paper, to correct disjoint uniform tandem duplications of length $\ell$, we set $L=\set{\ell}$, and then $L^\Delta=\emptyset$, giving us $F=\set{\ell}$, resulting in the same code $C_{\set{\ell}}$. While Theorem~\ref{th:disjoint} ends up with the same code as~\cite{JaiFarSchBru17a}, unfortunately in this case, it only guarantees the correction of \emph{disjoint} duplications.

Our second remark concerns $L$ of the form $[\ell]\eqdef\set{1,2,\dots,\ell}$. The cases of $L=[2]$ and $L=[3]$ were studied in~\cite{JaiFarSchBru17a}, where it was proved that choosing $F=L$ (namely, constructing $C_{[2]}$ and $C_{[3]}$) results in error-correcting codes capable of correcting any number of tandem duplications with lengths from $[2]$ or $[3]$ respectively. Extending these results to larger $k$ seems difficult, with~\cite{ZerEsmGul19} providing codes for $[\ell]$ with $4\leq \ell\leq 9$. These codes, however, are not of the form $C_F$. Using Theorem~\ref{th:disjoint}, we note that $[\ell]\cup [\ell]^{\Delta}=[\ell]$. Thus, choosing $F=L$ allows us to extend the results of~\cite{JaiFarSchBru17a} to any $\ell\in\N$, however, at the price of correcting only disjoint tandem duplications.

\subsection{Correcting Equal-Length Tandem-Duplication Errors}
   	
The second scenario we study has strings that are corrupted by any number of tandem duplications (not necessarily disjoint), with lengths from a given set $L$, but with all duplications of equal length.

Formally, let $x\in\Sigma^*$ be a string. We recall the definition of the tandem-duplication rules from~\eqref{eq:tandem}. Let $L\subseteq\N$ be a set of allowed duplication lengths. We say $z\in\Sigma^*$ is obtained from $x$ using $t$ duplications of equal length, if there exists $\ell\in L$, and non-negative integers $i_1,i_2,\dots,i_t$, such that
\[ z = T_{i_t,\ell}(T_{i_{t-1},\ell}(\dots T_{i_1,\ell}(u)\dots)),\]
and we denote it by $x\der^{=,t}_\ell z$. We denote the set of all strings obtained from $x$ by using any number of equal-length tandem duplications with lengths from $L$, by $D^{=,*}_L(x)$. Similar to the previous section, $C\subseteq \Sigma^n$ is a code capable of correcting any number of equal-length tandem duplications with lengths from $L$, if and only if for any two distinct $x,y\in C$ we have
\[
D^{=,*}_L(x)\cap D^{=,*}_L(y)=\emptyset.
\]

The code construction now follows:

\begin{theorem}
\label{th:eq}
Let $L\subseteq\N$ be a given set of duplication lengths, and let $F=L$. We further assume that for any $\ell,\ell'\in L$, $\ell>\ell'$, we have $\ell\geq 2\ell'$. Then the code $C_F$ of length $n\in\N$ can correct any number of equal-length tandem duplications with respect to $\cT_L$. Namely, for all distinct $x,y\in C_F$ we have
\[
D^{=,*}_L(x)\cap D^{=,*}_L(y)=\emptyset.
\]
\end{theorem}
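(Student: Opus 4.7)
I would argue by contradiction. Suppose distinct $x, y \in C_F$ are confusable; then there exist $\ell_x, \ell_y \in L$ and a common descendant $z$ obtained from $x$ by $t$ tandem duplications of length $\ell_x$, and from $y$ by $s$ tandem duplications of length $\ell_y$. Matching lengths $|x| = |y| = n$ force $t\ell_x = s\ell_y$. When $\ell_x = \ell_y = \ell$, the unique-root property of the uniform $\ell$-tandem-duplication system (as exploited in the proof of~\cite[Theorem 15]{JaiFarSchBru17a}) forces $x$ and $y$---both $\ell$-irreducible ancestors of $z$---to coincide, a contradiction. Hence I may assume $\ell_x \neq \ell_y$, and without loss of generality $\ell_x > \ell_y$, so by hypothesis $\ell_x \geq 2\ell_y$. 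The degenerate sub-cases $t = 0$ and $s = 0$ each force $x = y$ via the $\ell_y$- or $\ell_x$-irreducibility of the surviving codeword (any remaining duplication would leave a forbidden $\ell_y$- or $\ell_x$-square in it), so from now on $t, s \geq 1$.

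In this remaining regime the plan is to establish that the unique $\ell_y$-root $\bar z$ of $z$ satisfies $|\bar z| > n$. This would contradict the fact that $y$---being $\ell_y$-irreducible (since $\ell_y \in F$) and an $\ell_y$-ancestor of $z$---is itself the unique $\ell_y$-root of $z$ and hence must satisfy $|y| = |\bar z|$, whereas $|y| = n$. I would obtain the length bound by induction on $t$, via the sub-lemma that each $\ell_x$-duplication $w' = T_{i,\ell_x}(w)$ satisfies $|\ell_y\text{-root}(w')| \geq |\ell_y\text{-root}(w)| + (\ell_x - \ell_y)$. Iterating from the base case $|\ell_y\text{-root}(x)| = n$ (because $x$ itself is $\ell_y$-irreducible) then yields $|\bar z| \geq n + t(\ell_x - \ell_y) > n$ whenever $t \geq 1$, the desired contradiction.

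The main obstacle is proving this sub-lemma, and this is exactly where the hypothesis $\ell_x \geq 2\ell_y$ becomes essential. Writing $w' = uVV\tilde w$ with $|V| = \ell_x$, any $\ell_y$-square of $w'$ that is not already present in $w = uV\tilde w$ must straddle the central $V$-boundary and therefore lies inside a window of length at most $2\ell_y \leq \ell_x$. A Fine--Wilf-style periodicity analysis, in the spirit of Lemmas~\ref{lem:eqmidcover} and~\ref{lem:neqmidcover}, should show that all such straddling $\ell_y$-squares share a common period and collapse under a single $\ell_y$-reduction, so the window contributes at most $\ell_y$ new characters to the $\ell_y$-complexity. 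The delicate part is that the duplicated factor $V$ may itself already carry $\ell_y$-squares inherited from earlier $\ell_x$-duplications in the chain from $x$; one must track these inherited squares---showing they too live in earlier boundary windows inside $V$ of width smaller than $\ell_x$---and verify that duplicating $V$ does not amplify their $\ell_y$-complexity, so the inductive bound $\mu_{\ell_y}(w') \leq \mu_{\ell_y}(w) + \ell_y$ can in fact be maintained along the chain.
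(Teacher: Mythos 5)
Your high-level skeleton is sound and partly parallels the paper's opening moves (the reduction $t\ell_x = s\ell_y$, the elimination of $\ell_x=\ell_y$ via the unique-root property of the uniform system, the w.l.o.g.\ reduction to $\ell_x\geq 2\ell_y$). From that point on, however, you take a genuinely different route --- bounding the length of the unique $\ell_y$-root of $z$ from below --- and the entire weight of the argument lands on the sub-lemma that a single $\ell_x$-duplication increases the $\ell_y$-root length by at least $\ell_x-\ell_y$. You do not prove this sub-lemma; you say a Fine--Wilf-style analysis ``should show'' it and that one ``must track'' and ``verify'' the behaviour of inherited squares. That is precisely where the difficulty of the theorem lives, so as written the proposal has a genuine gap. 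Concretely: passing to $\phi_{\ell_y}$, the duplication $w=uV\tilde w \mapsto w'=uVV\tilde w$ inserts into $\phi_{\ell_y}(w)$ a block consisting of one ``new'' segment of length $\ell_y$ followed by a \emph{verbatim copy} of the last $\ell_x-\ell_y$ symbols of $\phi_{\ell_y}(w)$ restricted to $V$. So every $\ell_y$-square already sitting in that portion of $V$ (and such squares do arise in intermediate strings --- e.g.\ for $\ell_y=1$, $\ell_x=3$, duplicating the factor $aba$ of a square-free $x$ already creates the $1$-square $aa$ at the junction) is replicated, and each replicated square costs an extra $\ell_y$-deduplication on top of the one you budget for the junction. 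Your claim that inherited squares ``live in boundary windows of width smaller than $\ell_x$'' is an unproven structural invariant about the whole duplication chain, and it is not obviously preserved: after several $\ell_x$-duplications the old junctions can sit anywhere inside a later duplicated factor $V$. Until you either prove that at most one $\ell_y$-deduplication is gained per $\ell_x$-duplication, or find a substitute bound, the contradiction $\abs{\bar z}>n$ is not established.

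For contrast, the paper avoids this bookkeeping entirely. It uses $\phi_{\ell_y}$ only to extract $\ceil{t_y/2}$ pairwise \emph{disjoint} $\ell_y$-duplications in $z$ (Lemma~\ref{lem:phi}); it then shows each such $v_jv_j$ must mid-cover some $u_iu_i$ (otherwise de-duplicating all the $u_iu_i$ leaves a forbidden $\ell_y$-square inside $x$), and since the $v_jv_j$ are disjoint this forces $t_y\leq 2t_x$. The boundary case $t_y=2t_x$ (equivalently $\ell_x=2\ell_y$) is killed by an explicit factorization $u_1u_1=av_1v_1b$ showing $x$ would contain an $\ell_y$-square, and the strict inequality $t_y<2t_x$ then contradicts $\ell_x/\ell_y\geq 2$. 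If you want to salvage your root-length approach, you would need to prove your sub-lemma with the inherited-square invariant made precise; alternatively, note that the counting of disjoint $\ell_y$-squares via Lemma~\ref{lem:phi} is exactly the device that replaces your per-step complexity bound with a global one.
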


To prove this theorem we need a simple lemma, whose proof requires a tool first described in~\cite{FarSchBru16}. Denote the size of the alphabet by $q\eqdef\abs{\Sigma}$, $q\geq 2$, and w.l.o.g., assume $\Sigma=\Z_q$, the cyclic Abelian group of integers with addition modulo $q$. For $\ell\in\N$, we define $\phi_\ell:\Sigma^*\to\Sigma^*$ by
\[
\phi_\ell(x) = \pref_{\abs{x}}(x 0^\ell - 0^\ell x )
\]
where we assume $\abs{x}\geq \ell$, subtraction is performed element-wise over $\Z_q$, and where $\pref_m(z)$ denotes the $m$-prefix of $z$. It is easy to see that $\phi_\ell$ is invertible, and that the $\ell$-prefix of $\phi_\ell(x)$ is the $\ell$-prefix of $x$. Except for the $\ell$-prefix of $\phi_\ell(x)$, the remaining string contains $0^\ell$ as a factor exactly in positions in which an $\ell$-length tandem duplication occurs (see ~\cite[Lemma 6]{JaiFarSchBru17a}). Moreover, if $t$ tandem duplications of length $\ell$ occurred, we can find $t$ pair-wise disjoint $\ell$-factors whose de-duplication will restore the original string. Here, by de-duplication we mean replacing a factor $u^2$ by $u$.

\begin{example}
\label{ex:phi1}
Consider $\Sigma=\Z_6$, and let $\ell=2$. Assume $x=054213$, so then
\[
\phi_2(x) = \pref_{\abs{x}}(05421300-00054213)=
\pref_6(05433153)=054331.
\]
and consider the following sequence of tandem duplications of length $2$ (where the tandem duplication is underlined, and the middle is a vertical line):
\[
x=054213 \der 0\underline{54}|\underline{54}213
\der 054542\underline{13}|\underline{13}
\der 05\underline{45}|\underline{45}421313=z.
\]
Applying the $\phi_2$ transform, this derivation looks like:
\[
\phi_2(x)=054331 \rightarrow 054\underline{00}331 \rightarrow 05400331\underline{00} \rightarrow 0540\underline{00}033100 = \phi_2(z),
\]
where we underlined the inserted factors $0^2$ due to the tandem duplications. We further note three disjoint $0^2$,
\[
\phi_2(z) = 054\underline{00}\,\underline{00}331\underline{00}
\]
corresponding to three $2$-factors in $z$,
\[z = 054\underline{54}\,\underline{54}213\underline{13}\]
whose removal restores the original string $x$.
\end{example}

\begin{lemma}
\label{lem:phi}
Let $\ell,t\in\N$, and let $x\in\Sigma^*$ be a string, $\abs{x}\geq \ell$. If $x\der^{=,t}_\ell z$, then $z$ contains at least $\ceil{t/2}$ disjoint tandem duplications of length $\ell$.
\end{lemma}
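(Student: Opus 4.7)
The plan is to apply the $\phi_\ell$ transform introduced just before the lemma and to exploit the key fact quoted from \cite[Lemma 6]{JaiFarSchBru17a}: since $t$ tandem duplications of length $\ell$ turned $x$ into $z$, there are $t$ pairwise disjoint occurrences of $0^\ell$ in $\phi_\ell(z)$ lying beyond the $\ell$-prefix. Let $q_1<q_2<\dots<q_t$ be their starting positions, so that $q_{j+1}-q_j\geq \ell$ for every $j$. From the identity $\phi_\ell(z)[p]=z[p]-z[p-\ell]$ (for $p\geq \ell$), each such $0^\ell$-occurrence at $q_j$ forces $z[q_j-\ell,\dots,q_j-1]=z[q_j,\dots,q_j+\ell-1]$; in other words, $z$ contains a full length-$\ell$ tandem duplication $u_j u_j$ occupying the $2\ell$ consecutive positions $[q_j-\ell,\,q_j+\ell-1]$.

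The obstacle is that these $t$ tandem duplications are each of width $2\ell$, whereas the separations $q_{j+1}-q_j$ are only guaranteed to be at least $\ell$. Hence consecutive duplications $u_j u_j$ and $u_{j+1}u_{j+1}$ may overlap, and we cannot hope to keep all $t$. The simple fix is to retain only every other one, namely the duplications associated with $q_1,q_3,q_5,\dots$. Since each skipped index contributes at least $\ell$ to the gap, the retained starts satisfy $q_{2k+1}-q_{2k-1}\geq 2\ell$. A direct interval check then shows that the start $q_{2k+1}-\ell$ of the $(k{+}1)$-th retained duplication strictly exceeds the end $q_{2k-1}+\ell-1$ of the $k$-th, so the retained duplications are pairwise disjoint.

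The number of retained indices is exactly $\lceil t/2\rceil$, giving the required count of pairwise disjoint tandem duplications of length $\ell$ inside $z$. The only delicate point in the argument is to make sure the tool from \cite[Lemma 6]{JaiFarSchBru17a} really delivers $t$ $0^\ell$-occurrences that sit in the bulk (not in the $\ell$-prefix) and are pairwise separated by at least $\ell$; both facts are built into the exposition of $\phi_\ell$ preceding the lemma, so I would just cite them and let the parity-selection argument above finish the proof.
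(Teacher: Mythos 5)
Your proof is correct and takes essentially the same route as the paper: both pass to $\phi_\ell(z)$, invoke \cite[Lemma 6]{JaiFarSchBru17a} to obtain $t$ occurrences of $0^\ell$ (beyond the $\ell$-prefix) whose starting positions are at least $\ell$ apart, and then discard every other one so that the corresponding $2\ell$-wide tandem duplications in $z$ become pairwise disjoint. The paper merely phrases the thinning step as a greedy left-to-right scan that skips $\ell$ positions after each declared $0^\ell$ factor, which yields the same count of at least $\lceil t/2\rceil$ and the same disjointness argument as your parity selection.
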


\begin{proof}
Recall that we assume w.l.o.g. that $\Sigma=\Z_q$. Consider $\phi_\ell(x)$ and $\phi_\ell(z)$, ignoring their $\ell$-prefix. Since $z$ is obtained from $x$ using $t$ tandem duplications of length $\ell$, by~\cite[Lemma 6]{JaiFarSchBru17a}, $\phi_\ell(z)$ is obtained from $\phi_\ell(x)$ by inserting a factor $0^\ell$ a total of $t$ times.

We run the following simple procedure on $\phi_\ell(z)$ (without its $\ell$-prefix): We scan $\phi_\ell(z)$ from left to right, finding the first factor $0^\ell$, declaring it a tandem duplication of length $\ell$. We skip the $\ell$ positions following it (regardless of their content), and repeat. Since at the worst case, the $\ell$ skipped positions are another $0^\ell$ factor, we are guaranteed to declare at least $\ceil{t/2}$ tandem duplications. Since their locations are separated by at least $\ell$ letters, they must be pair-wise disjoint tandem duplications.
\end{proof}

\begin{example}
Continuing Example~\ref{ex:phi1}, in which $t=3$, Lemma~\ref{lem:phi} guarantees the existence of $\ceil{3/2}=2$ disjoint tandem duplications in $z$. The scanning process of $\phi_2(z)$ declares the following two places (underlined):
\[
\phi_2(z) = 054\underline{00}00331\underline{00},
\]
which indeed correspond to the following two disjoint tandem duplications in $z$:
\[
z=0\underline{54}|\underline{54}542\underline{13}|\underline{13}.
\]
\end{example}

\begin{proof}[Proof of Theorem~\ref{th:eq}]
Assume to the contrary that $C_F$ is not a code as desired. Hence, there exist two (distinct) confusable strings, $x,y\in C_F$. Since $x$ and $y$ are confusable, there exists
\[
z\in D^{=,*}_L(x) \cap D^{=,*}_L(y).
\]
It follows that there exist duplication lengths $\ell_x,\ell_y\in L$, and $t_x,t_y\in\N$, such that
\[
x \der^{=,t_x}_{\ell_x} z \qquad\text{and}\qquad
y \der^{=,t_y}_{\ell_y} z.
\]
In other words, we may reach $z$ either by $t_x$ duplications of length $\ell_x$ acting on $x$, or by $t_y$ duplications of length $\ell_y$ acting on $y$. Since in both cases we start with strings of length $\abs{x}=\abs{y}=n$, and end with the same string $z$, we necessarily have
\begin{equation}
\label{eq:prod}
t_x \ell_x = t_y \ell_y.
\end{equation}

We contend that we cannot have $\ell_x=\ell_y$. Assume to the contrary that $\ell\eqdef \ell_x=\ell_y$. By~\cite[Theorem 15]{JaiFarSchBru17a}, the code $C_{\set{\ell}}$ can correct any number of tandem duplications of length $\ell$. Since $\set{\ell}\subseteq F$, we have $C_F\subseteq C_{\set{\ell}}$, meaning $C_F$ can also correct any number of tandem duplications of length $\ell$. In particular, it does not contain two confusable codewords, a contradiction. Thus, $\ell_x\neq \ell_y$, and assume w.l.o.g. that $\ell_x>\ell_y$, and therefore, by~\eqref{eq:prod},
\[t_x<t_y.\]

By Lemma~\ref{lem:phi}, $z$ contains $\ceil{t_y/2}$ disjoint tandem duplications of length $\ell_y$. We can therefore write,
\[
z = y_1 v_1^2 y_2 v_2^2 y_3 \dots y_{\ceil{t_y/2}} v_{\ceil{t_y/2}}^2 y_{\ceil{t_y/2}+1},
\]
where $y_i\in\Sigma^*$ and $v_i\in\Sigma^{\ell_y}$, for all $i$.

Recall that $z$ was obtained from $x$ using $t_x$ tandem duplications of length $\ell_x$. Thus, $z$ must contain $t_x$ factors, $u_i^2$, $\abs{u_i}=\ell_x$, $1\leq i\leq t_x$, that are not necessarily disjoint, and whose de-duplication restores $x$ (see Example~\ref{ex:phi1}, and the discussion before it). W.l.o.g., we assume they are indexed from left to right. In particular, if we define
\[
m_i\eqdef \midp(\dots[u_i u_i]\dots),
\]
then
\[ m_1 < m_2 < \dots < m_{t_x}.\]

In a similar fashion to the previous section, we contend that each duplication $v_j v_j$ must mid-cover at least one duplication $u_i u_i$ in $z$. If we assume to the contrary that there is a duplication $v_j v_j$ that does not mid-cover any duplication $u_i u_i$, then one of the following intervals,
\[
(-\infty,m_1),(m_1,m_2),(m_2,m_3),\dots,(m_{t_x-1},m_{t_x}),(m_{t_x},\infty)
\]
entirely contains $\pos(\dots[v_j v_j]\dots)$. However, then, we can de-duplicate all the  $u_i u_i$ from $z$ (thus, obtaining $x$) but with the $v_j v_j$ factor remaining intact. Hence, $x$ contains a tandem duplication of length $\ell_y\in F$, contradicting the fact that $x\in C_F$.

Recall that the duplications $v_j v_j$, $1\leq j\leq \ceil{t_y/2}$, are pair-wise disjoint. Thus, a duplication $u_i u_i$ cannot be mid-covered by more than one duplication $v_j v_j$. Hence, $\ceil{t_y/2}\leq t_x$, implying
\begin{equation}
\label{eq:firstup}
t_y \leq 2t_x.
\end{equation}

Next, we focus on the extreme case of $t_y=2t_x$, and therefore, $\ell_x=2\ell_y$. By the above, it follows that $v_i v_i$ mid-covers $u_i u_i$ in $z$ for all $1\leq i\leq t_x$. It shall suffice to look at $v_1 v_1$ which mid-covers $u_1 u_1$. Looking at the positions $u_1 u_1$ occupies in $z$, we can write the equation
\begin{equation}
\label{eq:twice}
 u_1 u_1 = a v_1 v_1 b,
\end{equation}
with $1<\abs{a}<2\ell_y$, to ensure that $v_1 v_1$ mid-cover $u_1 u_1$. We assume that the middle of $v_1 v_1$ is not to the left of the middle of $u_1 u_1$ (see Fig.~\ref{fig:twice} throughout the rest of the proof), namely, $\ell_y\leq \abs{a}<2\ell_y$. The other case of $1 < \abs{a}<\ell_y$ has a symmetric proof.

\begin{figure}
\begin{center}
\begin{overpic}
[scale=0.75]
{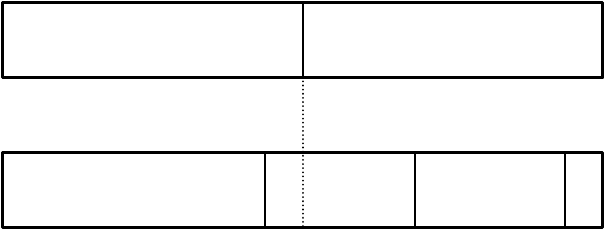}
\put(24,31){$u_1$}
\put(74,31){$u_1$}
\put(20,6){$a$}
\put(55,6){$v_1$}
\put(80,6){$v_1$}
\put(95.5,6){$b$}
\end{overpic}
\end{center}
\caption{The two decomposition of~\eqref{eq:twice}.}
\label{fig:twice}
\end{figure}

Looking at the left copy of $v_1$, we further decompose it according to the midpoint of $u_1 u_1$. Namely, we write
\[
v_1 = cd
\]
where
\[ \abs{c} = \ell_x-\abs{a}=2\ell_y-\abs{a}, \qquad \abs{d} = \ell_y-\abs{c} = \abs{a}-\ell_y.\]
We observe that
\[
\abs{b} = 2\ell_x-2\ell_y-\abs{a} = \abs{c}.
\]
By the $u_1 u_1$ duplication we therefore have $c=b$, and therefore,
\[ v_1 = cd. \]
Now, looking at the second copy of $u_1$ we have
\[
u_1 = d v_1 b = dcdc.
\]
Since $u_1$ is a factor of the codeword $x$, we observe that $x$ contains $(dc)^2$ as a factor, namely a tandem duplication of length $\ell_y\in F$, which is a contradiction.

Having ruled out the extreme case of $t_y=2t_x$, by~\eqref{eq:firstup}, we must have
\[
t_y < 2t_x.
\]
But that means
\[
2 > \frac{t_y}{t_x} \overset{(a)}{=} \frac{\ell_x}{\ell_y} \overset{(b)}{\geq} 2,
\]
where $(a)$ follows from~\eqref{eq:prod}, and $(b)$ follows from $\ell_x>\ell_y$ and the theorem requirement of the set $L$. We have reached our final contradiction, and so no two confusable codewords exist in $C_F$.
\end{proof}

We briefly provide two remarks concerning the construction from Theorem~\ref{th:eq}. First, we note that when $L=\set{\ell}$ is a singleton, it satisfies (vacuously) the conditions of Theorem~\ref{th:eq}. Thus, $C_{\set{\ell}}$ from~\cite{JaiFarSchBru17a} is a special case of Theorem~\ref{th:eq}, as is also the ad-hoc case of $L=\set{1,2}$ appearing in~\cite{JaiFarSchBru17a}.

Our second remark concerns the efficient decoding of the code from Theorem~\ref{th:eq}. This decoding relies on the linear-time decoding of the code from~\cite[Theorem 15]{JaiFarSchBru17a} which we now describe. Assuming $x\in\Sigma^n$ was stored, and was corrupted by tandem duplications of length $\ell$, resulting in a received string $z\in\Sigma^N$. We further assume the alphabet used is $\Sigma=\Z_q$, and we may therefore compute $\phi_\ell(z)$ and write
\[
\phi_\ell(z) = z_0 0^{m_1} z_1 0^{m_2} z_2 \dots 0^{m_k} z_k 0^{m_{k+1}},
\]
where $m_i$ are non-negative integers, $\abs{z_0}=\ell$, and $z_i$, $i\geq 1$, do not contain the letter $0$. It was proved in~\cite{JaiFarSchBru17a}, that de-duplicating all the tandem duplications of length $\ell$ results in the original codeword $x$. Since such duplications are equivalent to factors $0^\ell$ in $\phi_\ell(z)$, this decoding procedure is simply
\begin{equation}
\label{eq:decode}
x = \phi^{-1}_\ell\parenv*{
z_0 0^{m_1 \bmod \ell} z_1 0^{m_2 \bmod \ell} z_2 \dots 0^{m_k \bmod\ell } z_k 0^{m_{k+1} \bmod \ell}
}.
\end{equation}
Thus, computing $x$ requires $O(N)$ time, namely, time that is linear in the length of the received string. In the case of the code $C_F$ from Theorem~\ref{th:eq}, we do not know $\ell$. We do, however, know the length of the transmitted string, $n$, and so, necessarily, $\ell\leq n$ and $\ell\in L$. We can now try all possible $\ell\in L$, $\ell\leq n$, and see whether~\eqref{eq:decode} results in a decoded string of length $n$. By Theorem~\ref{th:eq}, we are guaranteed only one choice of $\ell$ is possible. The decoding complexity is $O(nN)=O(N^2)$.

To conclude this section, we can also combine the error models of Theorem~\ref{th:disjoint} and Theorem~\ref{th:eq}.

\begin{theorem}
\label{th:disjointeq}
Let $L\subseteq\N$ be a given set of duplication lengths, and let $F=L$. Then the code $C_F$ of length $n\in\N$ can correct any number of disjoint equal-length tandem-duplication errors with respect to $\cT_L$. Namely, for all distinct $x,y\in C_F$ we have
\[
D^{\dis,=,*}_L(x)\cap D^{\dis,=,*}_L(y)=\emptyset.
\]
\end{theorem}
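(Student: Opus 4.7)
The plan is to mirror the structure of the proof of Theorem~\ref{th:eq} but with the considerable simplification that the duplications are now given as disjoint from the start, so the $\phi_\ell$-transform of Lemma~\ref{lem:phi} is no longer needed. Assume for a contradiction that distinct codewords $x,y\in C_F$ are confusable, and pick $z\in D^{\dis,=,*}_L(x)\cap D^{\dis,=,*}_L(y)$. Then $z$ is reached from $x$ by $t_x$ pairwise disjoint tandem duplications of some common length $\ell_x\in L$, producing factors $u_i u_i$ in $z$, and from $y$ by $t_y$ pairwise disjoint tandem duplications of common length $\ell_y\in L$, producing factors $v_j v_j$. The length identity $\abs{z}=n+t_x\ell_x=n+t_y\ell_y$ gives $t_x\ell_x=t_y\ell_y$.

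The case $\ell_x=\ell_y$ I would dispatch by an invocation of Theorem~\ref{th:disjoint} applied to the singleton $L'=\set{\ell_x}$: then $F'=L'\cup L'^{\Delta}=\set{\ell_x}\subseteq L=F$, so $C_F\subseteq C_{F'}$, and Theorem~\ref{th:disjoint} already prohibits $x$ and $y$ from being confusable under disjoint duplications of the single length $\ell_x$.

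For $\ell_x\neq\ell_y$, assume w.l.o.g.\ $\ell_x>\ell_y$, which by $t_x\ell_x=t_y\ell_y$ forces $t_y>t_x$. The main structural claim, argued just as in Lemma~\ref{lem:disjointconfusable}, is that every $v_j v_j$ must mid-cover some $u_i u_i$ in $z$: otherwise one could undo each $u_i$-duplication by deleting the copy of $u_i$ disjoint from $\pos(\dots[v_j v_j]\dots)$, thereby reaching $x$ with $v_j v_j$ still present as a factor and contradicting $x\in C_F$ (since $\ell_y\in F$). Two disjointness observations then close the argument. First, the $v_j v_j$'s have pairwise disjoint positions, so each $u_i u_i$-midpoint lies in at most one $v_j v_j$, meaning each $u_i u_i$ is mid-covered by at most one $v_j v_j$. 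Second, the $u_i u_i$'s have pairwise disjoint positions, so consecutive $u_i u_i$-midpoints are separated by at least $2\ell_x>2\ell_y=\abs{v_j v_j}$, so each $v_j v_j$ mid-covers at most one $u_i u_i$. Combining these with the key claim, mid-covering induces an injection from $\{v_j\}$ into $\{u_i\}$, giving $t_y\leq t_x$ and contradicting $t_y>t_x$.

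The main obstacle is the "always deletable copy" in the structural claim: for a $v_j v_j$ that mid-covers no $u_i u_i$, I must show the deletion prescription is globally consistent across all $i$. This rests on noting that such a $v_j v_j$ cannot contain any $u_i u_i$ entirely (otherwise $\midp(u_i u_i)\in\pos(v_j v_j)$ and mid-covering would occur), so it overlaps at most two consecutive $u_i u_i$'s, and within each overlapped $u_i u_i$ sits strictly on one side of the midpoint, which uniquely determines the deletable copy. Since the $u_i u_i$'s are disjoint, the per-$i$ deletions do not interfere with one another, so the bookkeeping goes through and the counting argument closes cleanly.
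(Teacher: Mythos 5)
Your proof is correct and follows essentially the same route as the paper, which simply observes that disjointness upgrades the bound $\ceil{t_y/2}\leq t_x$ from the proof of Theorem~\ref{th:eq} to $t_y\leq t_x$, immediately contradicting $t_x<t_y$. Your only deviations are cosmetic: you dispatch the $\ell_x=\ell_y$ case via Theorem~\ref{th:disjoint} rather than the cited result of Jain et al., and you justify the mid-cover claim directly from disjointness of the $u_iu_i$ instead of reusing the midpoint-interval argument --- both are valid.
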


\begin{proof}
The proof is essentially a simple modification of the proof of Theorem~\ref{th:eq}. Only now, we are guaranteed there exist $t_y$ disjoint tandem duplication of length $\ell_y$, and so~\eqref{eq:firstup} becomes
\[
t_y\leq t_x,
\]
and we immediately reach a contradiction.
\end{proof}

\section{Conclusion}
\label{sec:conc}

In this paper we considered duplication-free codes. We provided two new main parametric constructions, for the case of disjoint duplication errors, and the case of equal-length duplication errors. We then also constructed a code for the combination of the two scenarios. All the code constructions are given a set $L$ of duplication lengths occurring in the channel, and design a set $F$ of forbidden duplications in the codewords. All of the constructed codes have positive asymptotic rate.

Several open questions remain. In particular, it is unknown whether the constructed codes in this paper attain the maximum asymptotic coding rate, given $L$. Another open question concerns the construction of Theorem~\ref{th:disjoint}. Unlike the second construction, it is lacking an efficient decoding algorithm. We leave these questions and others, for future work.

\backmatter

\bmhead{Acknowledgments}

This work was supported in part by the Zhejiang Lab BioBit Program (grant
no. 2022YFB507). The author M.~Schwartz is currently on a leave of absence from Ben-Gurion University of the Negev.

\bibliography{allbib}

\end{document}